\def\idty{{\leavevmode\rm 1\mkern -5.4mu I}} 
\def\Rl{{\mathbb R}}\def\Cx{{\mathbb C}}     
\def\norm #1{\Vert #1\Vert}
\def\ket #1{\vert #1\rangle}
\def\tr{\mathop{\rm Tr}\nolimits}
\mathchardef\ree="023C \mathchardef\imm="023D  
\def\BB{{\mathcal B}}
\def\HH{{\mathcal H}}
\def\KK{{\mathcal K}}
\def\AA{{\mathcal A}}
\def\deltahalf{\delta_{\frac12,\frac12}}
\newtheorem{thm}{Theorem}
\newtheorem{prop}[thm]{Proposition}
\begin{document}

\title{Maximal violation of Bell inequalities by position measurements}

\author{J. Kiukas}
\email{jukka.kiukas@itp.uni-hannover.de}
\author{R.F. Werner }
\email{reinhard.werner@itp.uni-hannover.de}

\affiliation{Inst. Theoret. Physik, Leibniz Universit\"at Hannover, Appelstr. 2 , 30167 Hannover, Germany}

\begin{abstract}
We show that it is possible to find maximal violations of the CHSH-Bell inequality using only position measurements on a pair of entangled non-relativistic free  particles. The device settings required in the CHSH inequality are done by choosing one of two times at which position is measured. For different assignments of the "+" outcome to positions, namely to an interval, to a half line, or to a periodic set, we determine violations of the inequalities, and states where they are attained. These results have consequences for the hidden variable theories of Bohm and Nelson, in which the two-time correlations between distant particle trajectories have a joint distribution, and hence cannot violate any Bell inequality.
\end{abstract}

\maketitle

\section{Introduction}

It is well-known that the position operators of a particle at different times do not in general commute. This is the reason why the notion of trajectories cannot be applied to quantum particles. But non-commutativity is also a useful feature in some experiments. In particular, it is essential in experiments aiming at violations of Bell inequalities. In this paper we show that the non-commutativity of positions at different times is sufficient for getting a maximal violation of the CHSH-Bell inequality, and find the states required for this.

Our first motivation for investigating this was the possibility of using such Bell experiments as a refutation of Hidden Variable Theories which do assign a distribution of trajectories to every quantum state: In such a theory the positions at all times have a joint distribution, and therefore cannot violate a Bell inequality. Hence their predictions must be in conflict with quantum mechanics and, most likely, with experiment. After completion of our work we found that this line of reasoning had already been followed by Correggi and Morchio \cite{Correggi}. We nevertheless include our discussion, and emphasize some additional points. Technically, the Bell violations found in \cite{Correggi} are for particles in an external potential, whereas we look at two free particles.

Our second motivation for the present work is the endeavour of finding a loophole free Bell test based on homodyne detection in quantum optics. In such a Bell measurement, each detection must be a function of just a single field quadrature, which is mathematically the same problem as using functions of a single position variable. This is impossible with Gaussian states, because the Wigner function then provides a joint distribution. But with the new abundance of non-Gaussian states recently realized in the lab \cite{Grangier,Auberson} there is a chance to find a feasible setup. Here the knowledge of the maximally violating states may be helpful, although only as a rough indication where to look. It would be even better to be able to start from a given state, and to identify the quadrature measurements giving the best violation.

Our paper and our results are organized as follows:
In Sect.~\ref{sec:Bohm} we briefly describe how our result contributes to the debate about hidden variable theories. In Sect.~\ref{sec:CHSH} we provide some general background concerning violations of the CHSH inequality. Here we introduce techniques related to the universal C*-algebra generated by two projections. These techniques are well known in the operator algebra literature, but as far as we know they have not been applied to simplify the theory of Bell inequality violations. In Sect.~\ref{sec:position} we outline how to get maximal CHSH violations from position measurements. There are three different settings: (1) We choose the ``+''-event of each measurement as a position outcome in some finite interval. When for both Alice and Bob, $d_1,d_2$ are the interval lengths used for the first and second setting, $m$ is the particle mass, and $t$ the time separation, then the attainable violation depends only on the dimensionless parameter $u=md_1d_2/(4t\hbar)$. We show that maximal violation is attained for infinitely many values of $u$ and also in the limit $u\to\infty$. (2) When the ``+''-event means that the particle is on the positive half-line, maximal violation can be almost achieved, up to an arbitrarily small error. Hence there are singular states (i.e., states not given by a density operator) for which maximal violation is attained. These are necessarily dilation invariant, up to a quadratic phase. Finally, (3) we look at periodic sets. It is well-known \cite{Busch}, that periodic functions of position and momentum commute, if the product of the periods is $2\pi \hbar / \text{integer}$. Translated to the setting of a free particle with time difference $t$ between position measurements with period $p_1$ and $p_2$, we find the two measurements to commute whenever $u^{-1}\in \mathbb{Z}$, where $u=mp_1p_2/(2\pi t\hbar)$. Of course, in that case no violation of a Bell inequality is possible. However, we show that this situation is very unstable, i.e., that the maximal violation jumps from zero to a finite value for $u$ arbitrarily close to an integer.
In an appendix we have collected some technicalities regarding case (2).

\section{The Bohm-Nelson theory}
\label{sec:Bohm}

When it first appeared, Bohm's hidden variable extension of quantum mechanics \cite{Bohm} met much opposition from the mainstream physical community because it appeared to violate some basic tenets of quantum theory. Heisenberg, in his paper introducing the uncertainty relations, had convinced the physics community that the notion of trajectories of individual particles had no place in the theory. There was even a theorem by von Neumann showing hidden variables to be impossible. To its proponents, Bohm was seen to restore Realism to physics, giving a complete moment-to-moment account of where all the particles of a complicated quantum system really were. In part, these were also the motivations of Edward Nelson for creating his ``stochastic mechanics''\cite{Nelson}. In addition, he endeavored to give a derivation of the theory, which was at the same time a derivation of quantum mechanics itself. Both theories are embedded in a family of such theories parameterized by the diffusion constant in units of $\hbar$ \cite{Davidson}, with Bohm's theory corresponding to $0$ and Nelson's to $1$. In the limiting case of infinite diffusion constant, we find a theory in which positions at different times are just taken to be independent%
\nocite{genRFW}%
\footnote{There are further generalizations. In fact, it is easy to build Markov processes following the quantum evolution of any observable \cite{genRFW}, so one could directly discuss everything in terms of spin variables rather than positions. However, this generalization is clearly against the taste of Bohmians, and we will ignore it as they have.}%
. For the conceptual problems we discuss here, the only salient feature of all these theories is that they provide a joint distribution for all particle positions at all times, such that the equal time probabilities for particle configurations exactly reproduce the quantum mechanical probability distribution $|\psi|^2$. This is the basis for the claim that Bohm's theory is empirically equivalent to quantum theory. At the very least, this agreement reassures us that some aspect of these ``real'' trajectories is correct.

On exactly the same footing, let us look at another quantity, which makes sense quantum and hidden variable  theories alike, namely the two-time correlations between distant non-interacting, but possibly entangled, subsystems. Of course, in the quantum case the positions of the same particle at different times do not commute, so quantum mechanics has no joint probability for these. But for correlations between different particles there is no such constraint, and we can directly compare the quantum predictions with the two-time correlation functions from the Bohm-Nelson theory. As we show below, the quantum and the Bohm-Nelson predictions turn out to be quite different (they also disagree between the Bohmian and the Nelsonian variants). So if we take the agreement of one-time correlations with quantum theory as evidence that there is something right about these trajectories, we are now forced to admit that there is also something wrong with them. Certainly, this disagreement completely invalidates the argument of ``empirical equivalence'' between Nelson-Bohm theory and quantum mechanics. We could even stage an experimentum crucis on the basis of the explicit states and observables computed below. There is little doubt how these would turn out, probably not even for the staunch defenders of these theories. So our argument in some sense refutes the Bohm-Nelson theory.

Of course, we are aware that the Bohmians and Nelsonians know about this disagreement, and will not be impressed \footnote{A notable exception is the founder of stochastic mechanics, who abandoned the theory, when he realized some of its unphysical non-local features \cite{Ascona}}. The simplest position is to include the collapse of the wave function into the theory \cite{Blanchard,Morato}. Then the first measurement instantaneously collapses the wave function.  So if agreement with quantum mechanics is to be kept, the probability distribution changes suddenly. There is no way to fit this with continuous trajectories: When the guiding field collapses, the particles must jump. While the glaring non-locality of this process may be seen as just another instance of implicate order, it introduces an element of unexplained randomness, and demotes the Bohm equation (or Nelson's Fokker-Planck equation) from its role as the fundamental dynamical equation for position.

This may be the reasons why many Bohmians  adopt a strongly contextual view. In this view one has to describe the measurement devices explicitly in the same theory, so all trajectories depend on the entire experimental arrangement. Therefore the trajectory probabilities in two experiments, in which the measurements on particle A happen at different times, have no relation to each other, not even for trajectories of particle B. So the two-time correlations computed from the 2-particle ensemble of trajectories are never observed anyhow, and hence pose no threat to the theory. The downside of this argument is that it also applies to single time measurements, i.e., the agreement between Bohm-Nelson configurational probabilities and quantum ones is equally irrelevant. The naive version of Bohmian theory holds ``position'' to be special, even ``real'', while all other measurement outcomes can only be described indirectly by including the measurement devices. Saving the Nelson-Bohm theory's failure regarding two-time two-particle correlations by going contextual also for position just means that the particle positions are declared {\it unobservable according to the theory itself}, hence truly hidden.

In this consistently contextual version of the theory, there may still be those ``real'' trajectories, but they are only for the eyes of Bohm's Demon, or some such hypothetical creature. No physical interaction, not even an ``ideal position measurement'', will reveal them to the mere human. This certainly explains the apparent paradox that Bohmians on the one hand place so much value on being able to say where the particles really are, but are, on the other hand, so remarkably uninterested in actually computing trajectories. But when the interest in the real trajectories is gone, the only gain from the whole theory seems to be a pro forma justification for saying that the hand of a voltmeter is really somewhere. The mountain in labor gave birth to a mouse.

\section{General structure of CHSH violations}
\label{sec:CHSH}
In this chapter we look at the general problem of finding the maximal quantum violations of the Bell-CHSH inequality, when the measurements of Alice and Bob are given. All this is well-known, but since we need it several times, it may be useful to state the criteria in a compact form.

Each of the measurements in the CHSH setting is a POVM with outcomes ${+1,-1}$, which means that it is characterized by two positive operators $F_\pm$ with $F_++F_-=\idty$. We can parameterize such observables by the single operator $A=F_+-F_-$, which gives the expectation of the outcome, and satisfies $-\idty\leq A\leq\idty$. Then $F_\pm=(\idty\pm A)/2$. We will assume the measurement to be projection valued (i.e. $F_\pm^2=F_\pm$), which is equivalent to $A=A^*$ and $A^2=\idty$. In the CHSH setting Alice chooses two such measurements, $A_1,A_2$, and Bob chooses $B_1,B_2$. Since their respective labs are widely separated, they can make their choices independently, and we may take $A$ and $B$ as acting on the respective tensor factors of the Hilbert space $\HH_A\otimes \HH_B$ associated with the combined system. There is a quantum state $\rho$ of this system, in which correlations $\tr\rho A_iB_j$ can be determined. The {\it CHSH-correlation} is the linear combination of four such terms, which is the expectation of the operator
\begin{equation}\label{bellop}
    T=A_1\otimes (B_1+B_2)+A_2\otimes (B_1-B_2).
\end{equation}
Given the operators $A_i,B_j$, the supremum of the CHSH-correlations $\tr\rho T$ attainable with quantum states $\rho$ are given by the largest elements in the spectrum of $T$, and since we can invert the outcomes of Alice's measurements ($A_i\mapsto-A_i$) we are equally interested in the most negative expectations. To summarize, we are looking for the operator norm $\norm T$. For arbitrary operators $X$ we have $\norm X^2=\norm{X^*X}$, and since $T=T^*$ we have $\norm T=\sqrt{\norm{T^2}}$. A simple algebraic computation using the properties of the operators $A_i,B_j$ stated above gives
\begin{equation}\label{bellopsq}
    T^2=4(\idty+A_3\otimes B_3),
\end{equation}
where we have denoted e.g. $A_3:= (2i)^{-1}[A_1,A_2]$. Since $A_1^2=\idty$, $A_1$ is unitary, and we have $A_1A_3A_1=-A_3$. Hence, the spectrum of $A_3$ is symmetrical around zero, with maximum equal to $\norm{A_3}$. This gives a compact expression for the maximal attainable CHSH-correlation, namely
\begin{equation}\label{maxviol}
    \norm T=\sqrt{4(1+\norm{A_3}\,\norm{B_3})}.
\end{equation}
In particular, when either Alice's or Bob's measurements commute, so the norm of the corresponding commutator vanishes, we get $\norm T=2$, i.e., the CHSH inequality is satisfied. On the other hand, since
$\norm{[A_1,A_2]}\leq 2\norm{A_1}\norm{A_2}=2$, we have $\norm T\leq\sqrt8=2\sqrt2$, which is known as Tsirelson's inequality. For our purposes, the main gain from (\ref{maxviol}) is that the determination of the maximal violation is reduced to the estimates of commutators, which can be done separately for Alice and Bob.

\subsection{The algebra generated by two projections}
Note that both on Alice's side (and similarly on Bob's) only two projections $P_i=(\idty+A_i)/2$ are relevant. Let $\AA(P_1,P_2)$ denote the algebra generated by the two projections $P_1$ and $P_2$ together with the identity $\idty$. It turns out \cite{Halmos,Davis,Raeburn,Borat} that this can be understood completely in terms of $2\times2$-matrices.  Indeed, we observe that
\begin{equation}\label{centerP2}
    C=\idty-P_1-P_2+P_1P_2+P_2P_1
\end{equation}
satisfies
\begin{equation}\label{centerP2p}
    CP_1=P_1P_2P_1=P_1C,
\end{equation}
and a similar relation for $P_2$. Therefore $C$ commutes with the generating projections of the algebra, and hence with all of $\AA(P_1,P_2)$. The central element $C$ satisfies $0\leq C\leq\idty$, because
$C=\idty-(P_1-P_2)^2$, and
\begin{equation*}
 -\idty\leq -P_2\leq   (P_1-P_2)\leq P_1\leq\idty,
\end{equation*}
so $0\leq(P_1-P_2)^2\leq\idty$. Clearly, $C=0$ means that $P_1$ and $P_2$ are orthogonal, whereas $C=\idty$ means that $P_1$ and $P_2$ are equal. At these extremes, $[P_1,P_2]=0$. More generally, the commutator satisfies
\begin{equation}\label{commPP2}
    [P_1,P_2]^*[P_1,P_2]=-[P_1,P_2]^2=C(\idty-C).
\end{equation}
Hence the largest norm for the commutator square is $\frac14$, attained at $C=\frac12\idty$, where indeed the operators
$A_i=2P_i-\idty$ appearing in the CHSH-inequality also attain their maximal commutator norm
\begin{equation}\label{commPP2n}
    \norm{A_3} = \tfrac 12\norm{[A_1,A_2]}=2\norm{[P_1,P_2]}=2 \sqrt{\norm{C(\idty-C)}}=1.
\end{equation}

Now we can express every element of $\AA(P_1,P_2)$ as a linear combination of the four terms $\idty, P_1, P_2, P_1P_2$, multiplied by suitable polynomials in $C$. It is convenient to choose another basis, in which multiplication becomes ordinary matrix multiplication, and
such an isomorphism can be implemented at the Hilbert space level (see e.g. \cite{Halmos,Davis}):
First we split off the null space of all commutators, i.e., $\HH_0=\{\phi|C(\idty-C)\phi=0\} = \ker A_3$. On this space, which is clearly an invariant subspace of the two projections, all four eigenvalue combinations of two commuting projections are possible. Going to the orthogonal complement $\HH_0^\perp$, we put $\KK:=\{\phi\in\HH_0^\perp|P_1\phi=\phi\} = P_1\HH_0^\perp$, and let $H$ denote the restriction of the central element $C$ to this invariant subspace, i.e. $H=C|_{\mathcal K}=P_1P_2P_1|_{\mathcal K}$. Then we define
\begin{equation}\label{V2H}
    V:\HH_0^\perp\to\KK\otimes \Cx^2
\end{equation}
by
\begin{equation}\label{Vphi}
    V\phi=iP_1\phi\otimes {\ket+}_1+\frac1{\sqrt{H(\idty_{\mathcal K}-H)}}P_1P_2(1-P_1)\phi\otimes{\ket-}_1,
\end{equation}
where ${\ket\pm}_1=\tfrac{1}{\sqrt{2}}\left(\begin{smallmatrix} 1\\ \pm 1\end{smallmatrix}\right)$. One can readily check that this map is unitary. Operators on $\KK\otimes\Cx^2$ can be considered as $\BB(\KK)$-valued $2\times2$-matrices; this gives
\begin{align}\label{paulirel}
A_1|_{\HH_0^\perp}&\cong I\otimes \sigma_1,\nonumber\\
A_2|_{\HH_0^\perp} &\cong \alpha(H)\otimes \sigma_1+\beta(H)\otimes\sigma_2,\\ \nonumber
A_3|_{\HH_0^\perp} &\cong \beta(H)\otimes \sigma_3,
\end{align}
where $\alpha,\beta:[0,1]\to \Rl$ are given by $\alpha(h) = 2h-1$, $\beta(h) = 2\sqrt{h(1-h)}$.
The point of this decomposition is, of course, that in the matrix entries we only have functions of the central element $C$ or rather its compression $H$.

\subsection{Attained maximal violations}
We now use the detailed form (\ref{paulirel}) of the operators to get better information about the states where large violation is attained. Both algebras now have a central element, $C_A$ and $C_B$, respectively, giving the compressions $H_A$ and $H_B$. When these are fixed $h$-numbers, the four operators $A_i,B_j$ are completely fixed $2\times2$-matrices, and we can explicitly find a state on $\Cx^2\otimes\Cx^2$ maximizing the violation. In general we can do this maximization at every pair of values, which by \eqref{maxviol} gives the function
\begin{equation}\label{betaCaCb}
    \beta(h_A,h_B)=2\sqrt{1+4\sqrt{h_A(1-h_A)}\,\sqrt{h_B(1-h_B)}}
\end{equation}
plotted in Fig.~\ref{fig:bump}.
\begin{figure}[h]
  \centering
  \includegraphics[width=8cm]{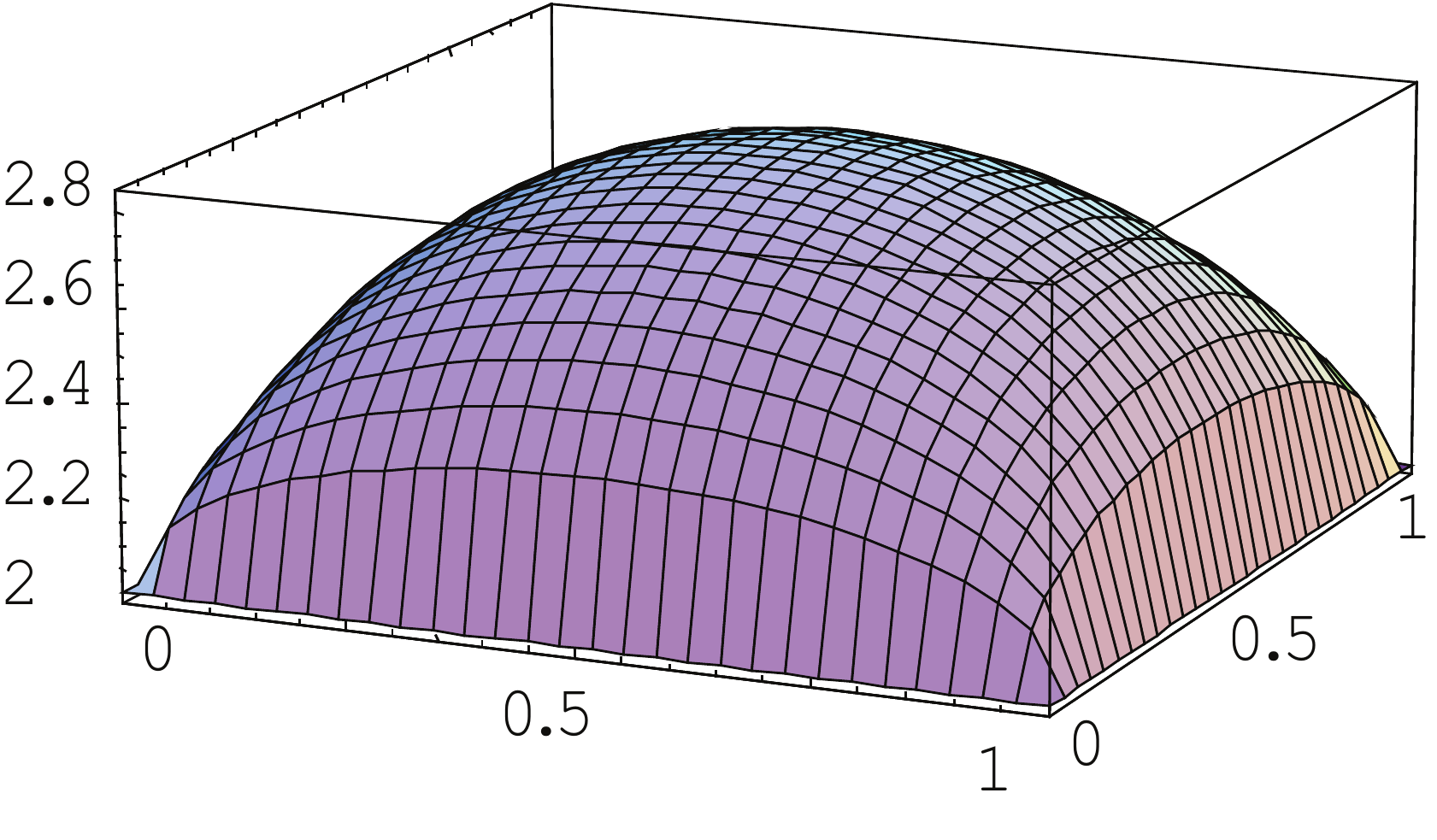}
  \caption{Maximal CHSH-correlation as a function of the central parameters of Alice and Bob}
 \label{fig:bump}
 \end{figure}
Given the joint probability distribution of $H_A$ and $H_B$, the largest attainable CHSH-correlation will be the expectation of (\ref{betaCaCb}) with respect to this distribution. Obviously, for large correlation we want to choose a joint distribution which is concentrated as near the point $h_A=h_B=\frac12$ as possible.

This leads to the following three cases:
\begin{enumerate}
\item When $\frac12$ is an eigenvalue of both $H_A$ and $H_B$, we will chose a maximizing vector from these eigenspaces. Then the CHSH-correlation will be exactly $2\sqrt2$.
\item  When $\frac12$ lies in both spectra, but for one of these operators is not an eigenvalue (i.e., lies in the continuous spectrum), the CHSH-correlation for any state represented by a density operator in the given Hilbert space will be strictly less than $2\sqrt2$, but can be chosen arbitrarily close to this value.
\item Finally if $\frac12$ is not in the spectrum of $H_A$ or $H_B$, the CHSH-correlation is less than $2\sqrt2-\varepsilon$ for some $\varepsilon>0$ and all states.
\end{enumerate}

To characterize the structure of the maximally violating states, we would now like to extract from the case 1 as much information as possible about further expectation values, including those not directly measured in a Bell experiment. Similarly, in case 2 we are interested in the limits of expectation values $\tr(\rho_nA)$, for $\rho_n$ a sequence of density operators with asymptotically maximal violation. It is convenient to treat these two cases on the same footing by choosing a convergent subsequence of the $\rho_n$ in the weak*-topology, and thereby find an exactly maximally violating limiting state. This is possible if we extend the notion of ``states'' from density operators to arbitrary expectation value functionals  $\omega:\mathcal{B}(\HH_A\otimes \HH_B)\to\Cx$. Of these we only require linearity, positivity and normalization, so they are states in the sense of C*-algebra theory. The state space of a C*-algebra is compact with respect to ``convergence of all expectation values'', so convergent subsequences in this wider setting always exist. Of course, in case 2 a sequence $\rho_n$ can converge only to a {\it singular state} and not a ``normal'' one, given by a density operator. The singular state is never unique, because fixing such a state is the non-commutative analog of explicitly defining a free ultrafilter. However, as will be seen below, all these states may well agree on some observables of interest. 

Geometrically, the CHSH expression with fixed $A_i,B_j$ is an affine functional on the state space of $\mathcal{B}(\HH_A\otimes \HH_B)$. It reaches its maximum at an extreme point, so it is not surprising that this entails some special relations. The typical tool here is the Cauchy-Schwarz inequality in the form
$|\omega(X^*Y)|^2\leq \omega(X^*X)\omega(Y^*Y)$. Hence if we know that the expectation of a positive operator like $X^*X$ vanishes we can conclude that $\omega(X^*Y)=\omega(Y^*X)=0$ for all $Y$. This approach has been applied \cite{SW871III} to the CHSH inequality by writing $2\sqrt2-T$ as the sum of several operators of the form $X^*X$. Here we can achieve similar results by looking at the explicit form (\ref{paulirel}) of the operators $A_i,B_j$ after the transformation (\ref{Vphi}).

Under this transformation Alice's Hilbert space becomes $\HH_0^A\oplus(\KK_A\otimes\Cx^2)$, so that
$H_A$ is an operator on $\KK_A$, and the given operators $A_1,A_2$ take the form (\ref{paulirel}). Of course, an analogous decomposition holds for Bob. The projections onto these subspaces as well as the spectral families of the operators $H_A,H_B$ commute with all $A_i,B_j$. Suppose we take a joint spectral projection $P_\varepsilon$ of the commuting operators $H_A,H_B$ corresponding to a set of distance $\varepsilon$ to the point $(\frac12,\frac12)$. Then $TP_\varepsilon$ is strictly smaller than $2\sqrt2$, so a maximizing state must have $\omega(P_\varepsilon)=0$. Hence a maximally violating state must vanish on $\HH_0^A\otimes\HH_B$ and $\HH_A\otimes\HH_0^B$, and its restriction to $\mathcal{B}(\KK_A\otimes\KK_B)$ must be a state $\deltahalf$ giving the sharp values $\frac12$ to both $H_A$ and $H_B$, in the sense that
\begin{equation}\label{delhalf2}
    \deltahalf\Bigl((H_A-\frac12\idty)^2\Bigr)
    =\deltahalf\Bigl((H_B-\frac12\idty)^2\Bigr)
    =0.
\end{equation}
At such a point we can set $\alpha(h)=0$ and $\beta(h)=1$ in formula \eqref{paulirel}, and its analogue for $B_1,B_2$, and just consider the maximization of the CHSH expression with fixed qubit operators $A_i=B_i=\sigma_i$. Since the maximum of $T$ for these operators is attained at the unique pure state
\begin{equation}\label{violatingvector0}
\Psi_0=\frac{1}{\sqrt{2}}(e^{-i\pi/4} |++\rangle+|--\rangle)\in \Cx^2\otimes \Cx^2 = \Cx^4
\end{equation}
(where $|\pm\rangle$ are the eigenvectors of $\sigma_3$), we conclude that the overall state must be of the form
\begin{equation}\label{violstate}
    \omega=\deltahalf\otimes |\Psi_0\rangle\langle\Psi_0|.
\end{equation}
It is clear, that conversely, any state of this description will be maximally violating.

In case 1, the explicit form for the maximally violating pure states in the original representation is now
\begin{equation}\label{violatingvector2}
\Psi = \frac 1{\sqrt 2} [e^{-i\pi/4}e^{A,+}\otimes e^{B,+}+e^{A,-}\otimes e^{B,-}],
\end{equation}
where e.g.
\begin{equation}\label{eeq}
e^{A,\pm}:=V_A^*(f_A\otimes |\pm\rangle) = (-iP^A_1\pm (\idty-P^A_1))(\sqrt{2}P^A_2f),
\end{equation}
and $f_A$ is a normalized eigenvector of $H_A$ belonging to the eigenvalue $\frac 12$. In case 2, we choose a sequence $(f^A_n)$ of unit vectors such that $\|H_Af^A_n -\tfrac 12 f^A_n\|\rightarrow 0$ (usually called \emph{approximate eigenvectors}). Defining $\Psi_n$ using $f_n^A$ and $f_n^B$ as in \eqref{violatingvector2}, we get the asymptotic maximal violation $\lim_{n\rightarrow \infty} \langle \Psi_n|T\Psi_n\rangle = 2\sqrt 2$; the states $|\Psi_n\rangle\langle \Psi_n|$ approximate some singular state of the form \eqref{violstate}. This systematic construction of (approximate) maximally violating wave functions will be used in the next section.

An interesting corollary of the above structure is the {\it cryptographic security} of maximal CHSH correlations. We are then interested in the possible correlations between the observed data and the measurements made by an eavesdropper ``Eve'' in a separate lab. The measurement of Eve is then described by an operator $E$ commuting with all the operators $A_1,A_2,B_1,B_2$ used by Alice and Bob. Hence $E$ lives on the tensor factor $\KK_A\otimes\KK_B$, and from the form (\ref{violstate}) of the state, it is clear that Eve's results are independent of Alice's and Bob's. 
 This is summarized in the following proposition.

\begin{prop} \label{properties} Let $\omega$ be a state maximally violating the CHSH inequality on operators $A_1,A_2,B_1,B_2$. Let $p$ be a non-commutative polynomial in four variables, and set $P=p(A_1,A_2,B_1,B_2)$
and $P^0=p(A_1^0,A_2^0,B_1^0,B_2^0)$, where $A_i^0=\sigma_i\otimes\idty$ and $B_i^0=\idty\otimes\sigma_i$.
Then for any operator $E$ commuting with all $A_i,B_j$:
\begin{equation}\label{secure}
    \omega(EP)=\omega(E)\langle \Psi_0 |P^0\Psi_0\rangle.
\end{equation}
\end{prop}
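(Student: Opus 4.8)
The plan is to transport the entire computation into the explicit qubit picture of (\ref{paulirel}) and (\ref{violstate}), and to show that the matrix-entry functions of $H_A,H_B$ may be frozen at the value $\tfrac12$ at no cost, so that $P$ collapses to $\idty\otimes P^0$ under $\omega$. Since $\omega$ is maximally violating it has the form (\ref{violstate}); in particular it vanishes on $\HH_0^A\otimes\HH_B$ and $\HH_A\otimes\HH_0^B$, so only the compression to $(\KK_A\otimes\Cx^2)\otimes(\KK_B\otimes\Cx^2)$ is relevant, and there $A_1,A_2,B_1,B_2$ are given by (\ref{paulirel}). Write $G_A=H_A-\tfrac12\idty$ and $G_B=H_B-\tfrac12\idty$. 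Because the central element $C_A$ of (\ref{centerP2}) is a polynomial in $A_1,A_2$, and $E$ commutes with all $A_i,B_j$ by hypothesis, the operators $G_A,G_B$ (functions of $H_A,H_B$, which live on $\KK_A,\KK_B$ and commute with the Pauli factors) commute with every $A_i$, $B_j$ and with $E$: they are central for the whole calculation.

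First I would isolate the elementary vanishing lemma. By (\ref{violstate}) together with the sharpness relations (\ref{delhalf2}) we have $\omega(G_A^2)=\omega(G_B^2)=0$. The Cauchy--Schwarz inequality $|\omega(G_AZ)|^2\le\omega(G_A^2)\,\omega(Z^*Z)$ then forces $\omega(G_AZ)=0$ for every bounded $Z$, and likewise $\omega(G_BZ)=0$. Combined with centrality this says that any product carrying a factor $G_A$ or $G_B$ has zero $\omega$-expectation: one slides the central factor to the far left and applies the lemma.

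The main work is the replacement step. Substituting (\ref{paulirel}) into $P=p(A_1,A_2,B_1,B_2)$ and collecting the mutually commuting coefficient functions of $H_A,H_B$ to the left of the Pauli words yields $P=\sum_W c_W(H_A,H_B)\otimes W$, where $W$ ranges over words in $\sigma_1,\sigma_2,\sigma_3$ on the two qubit factors; substituting instead $A_i^0,B_j^0$ gives the same words evaluated at $\alpha=0,\beta=1$, so that $\idty\otimes P^0=\sum_W c_W(\tfrac12,\tfrac12)\,\idty\otimes W$. The crux is to check that each central difference $c_W(H_A,H_B)-c_W(\tfrac12,\tfrac12)$ factors as $G_AR_A^W+G_BR_B^W$ with $R_A^W,R_B^W$ bounded and central. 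This rests on the scalar identities $\alpha(h)=2(h-\tfrac12)$ and $\beta(h)-1=-4(h-\tfrac12)^2\bigl(\beta(h)+1\bigr)^{-1}$, both divisible by $(h-\tfrac12)$ with bounded quotient since $\beta(h)+1\ge1$ on $[0,1]$; expanding the polynomial $c_W$ in the increments $\alpha(H)$ and $\beta(H)-\idty$ about the frozen values then produces the factorization term by term. I expect this bookkeeping, together with the verification that $E$ genuinely commutes with $H_A,H_B$ (so that the central sliding is legitimate), to be the only delicate point.

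Granting the factorization, $P-\idty\otimes P^0=\sum_W(G_AR_A^W+G_BR_B^W)\otimes W$, so the vanishing lemma gives $\omega\bigl(E(P-\idty\otimes P^0)\bigr)=0$. Finally, since $E$ acts on $\KK_A\otimes\KK_B$ and $\omega=\deltahalf\otimes\ket{\Psi_0}\bra{\Psi_0}$, one reads off $\omega\bigl(E(\idty\otimes P^0)\bigr)=\deltahalf(E)\,\braket{\Psi_0}{P^0\Psi_0}$ and $\omega(E)=\deltahalf(E)$, whence $\omega(EP)=\omega(E)\braket{\Psi_0}{P^0\Psi_0}$, which is (\ref{secure}).
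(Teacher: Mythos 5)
Your proposal is correct and takes essentially the same approach as the paper: the paper's own (very terse) argument also rests on the structure theorem \eqref{violstate}, the sharpness relations \eqref{delhalf2} exploited via Cauchy--Schwarz, and the assertion that $E$ lives on the tensor factor $\KK_A\otimes\KK_B$; your factorization of $\alpha(H_A)$ and $\beta(H_A)-\idty$ through $G_A$ merely makes explicit the paper's ``set $\alpha(h)=0$ and $\beta(h)=1$'' step. The one point you use without proof---that $E$ acts trivially on the two qubit factors---is asserted equally without proof in the paper (it follows from a short commutant computation: commutation with $A_1\cong I\otimes\sigma_1$ and with $A_2$ forces the $\sigma$-components of $E$ to anticommute with the injective operator $\beta(H_A)$, hence to vanish), so your argument is at least as complete as the original.
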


\section{Position measurements at different times}
\label{sec:position}

We now proceed to the case of position measurements. The Heisenberg picture position operator of a massive, freely evolving nonrelativistic particle with mass $m$ is given by
\begin{equation}\label{evolution}
\mathsf{Q}_t = \mathsf{P}t/m+\mathsf{Q}, \,\,\, t\in \Rl,
\end{equation}
where $\mathsf{Q}$ and $\mathsf{P}$ are the standard position and momentum operators, acting in the Hilbert space $L^2(\Rl)$ (in particular, $\mathsf{P}=-i\hbar\frac{d}{dx}$). Concerning measurements of $Q_t$ (position at time $t$), we are only interested in recording whether the outcome lies in a fixed interval $\Delta\subset \Rl$, in which case we assign the value ''$+1$'' to it; otherwise we label it ''$-1$''. The corresponding two-valued observable is $2\chi_\Delta(\mathsf{Q}_t)-1$.

We consider the case where Alice makes measurements with one particle, and Bob with another one; let $A_1:=2\chi_{\Delta_{1}}(\mathsf{Q}^A)-1$ and $A_2:=2\chi_{\Delta_{2}}(\mathsf{Q}^A_{t})-1$ be the position measurements for Alice's particle at time zero and time $t>0$, with intervals $\Delta_{1}$ and $\Delta_{2}$, respectively, and let $B_i$ be the similar ones for Bob. For simplicity, we suppose that both use the same measurement intervals, same time $t$, and particles of same mass $m$. Now we are in a situation discussed in the preceding section.

Since the operators are identical for both parties, we consider only Alice's part and drop the associated index when there is no confusion. The two projections are now $P_1 = \chi_{\Delta_1}(\mathsf{Q})$ and $P_2=\chi_{\Delta_2}(\mathsf{Q}_t)$. We begin with the fact that the pair $(\mathsf{Q},t^{-1}m\mathsf{Q}_t)$ is canonically conjugated, and therefore unitarily equivalent to the pair $(\mathsf{Q},\mathsf{P})$, the unitary operator in question being simply $e^{i(2t)^{-1}m\mathsf{Q}^2/\hbar}$. With this equivalence, $P_1\simeq \chi_{\Delta_1}(\mathsf{Q})$, $P_2\simeq \chi_{t^{-1}m\Delta_2}(\mathsf{P})$; in the following, we will simply replace the $P_i$ with these operators.

The idea is to consider the possibility of maximal violation of the CHSH inequality for three types of concrete choices for the localization intervals $\Delta_i$, exhibiting different commutativity behavior of the position and momentum projections $P_1$ and $P_2$ \cite{Busch}:
\begin{itemize}
\item[(1)] For compact intervals, $P_1$ and $P_2$ are \emph{partially commutative}, i.e. $\ker[P_1,P_2]=\ker A_3$ is neither $\{0\}$ nor $\HH$. Indeed, there are common 0-eigenvectors of $P_1$ and $P_2$.
\item[(2)] For half-lines, the projections are \emph{totally noncommutative}, i.e. $\ker[P_1,P_2] = \{0\}$, and $\mathcal{K}=L^2(\Delta_1)$.
\item[(3)] For periodic sets, the periods can be chosen so that $P_1$ and $P_2$ are commutative, i.e.
$\ker [P_1,P_2] = \HH$. Then $\mathcal{K}=\{0\}$.
\end{itemize}
(The full characterization of commuting functions of $\mathsf{Q}$ and
$\mathsf{P}$ is given in \cite{BuschII}; for a generalization to Abelian groups, see \cite{Ylinen}.) The projections apparently depend on various parameters $\Delta_1,\Delta_2, t,m$; however, as the dilations are represented by unitary operators, the only relevant parameter is the scale of the $\mathsf{Q}$-interval relative to the $\mathsf{P}$-interval. In case (2) there is no specific scale, because the projections are invariant under dilations; hence the structure of the Bell inequality violations does not depend at all on the parameters. In cases (1) and (3), $\mathsf{Q}$- and $\mathsf{P}$-sets are characterized by lengths $l_1$ and $mt^{-1}l_2$, respectively, where the $l_i$ are proportional to the lengths (case (1)) or periods (case (2)) of the sets $\Delta_i$. If we fix the unit of position as $l_1$ (thereby making the position variable dimensionless), the unit of momentum will be $\hbar l_1^{-1}$; in these units, the above characteristic lengths are $1$ and
\begin{equation}\label{ueq0}
u=\frac{ml_1l_2}{t\hbar},
\end{equation}
respectively. We can equally well fix the unit of momentum as $mt^{-1}l_2$, in which case the unit of position is $t\hbar /(ml_2)$; the characteristic lengths are then $u$ and $1$, respectively. Hence the only relevant parameter is the dimensionless scale $u$. For technical reasons, we will use the first choice of units in case (1) and the second in (2). For both choices of units, the associated operators are dimensionless; we will denote these by $Q$ and $P$.

\subsection{Compact intervals: partially commutative case}
Here we let $\Delta_i\neq \emptyset$ be a compact interval for $i=1,2$. As already mentioned, $\ker A_3$ is nontrivial; however, $P_1(\HH)=L^2(\Delta_1)\subset \HH_0^{\perp}$ (see e.g. \cite{Busch}),
so the relevant subspace $\mathcal{K}$ is just $L^2(\Delta_1)$.
It is convenient to choose the length scales as $l_i:=d_i/2$, with $d_i$ the length of $\Delta_i$; passing to the units where $l_1$ is $1$ as discussed above, we see that the relevant operator $H$ is unitarily equivalent to
$$H_u := \chi_{[-1,1]}(Q)\chi_{[-u,u]}(P)\chi_{[-1,1]}(Q)\in \BB\big(L^2([-1,1])\big),$$
where $u$ is given by \eqref{ueq0}, i.e. $u= md_1d_2/(4t\hbar)$.
(This equivalence can be seen easily by first applying the usual translation and ''velocity boost'' unitaries with appropriate shifts to center the intervals to the origin,
and then dilating by $d_1/2$.)

The structure of $H_u$ has been extensively studied because of its relevance in band- and timelimiting of signals (see, for instance \cite[pp. 21-23]{Daubechies}, \cite[pp. 121-132]{Dym}, \cite{Slepian}, or the original papers by Landau, Pollack and Slepian \cite{Pollack, LandauI, Slepian}). We summarize the relevant mathematical facts briefly in the following paragraph.

The operator $H_u$ is explicitly given by
$$(H_u\varphi)(v) = \int_{-1}^1 \frac{\sin(u(v-w))}{\pi (v-w)}\, \varphi(w)\,dw, \, \, \varphi\in L^2([-1,1]),$$
from which it follows that $H_u$ commutes with the differential operator
$\frac{d}{dv} \left[(1-v^2)\frac{d}{dv}\right]-u^2v^2$ that determines the
angular part of the wave equation in prolate spheroidal coordinates. This differential operator has
a complete orthonormal set of eigenfunctions $\psi_n^u\in L^2([-1,1])$, $n=0,1,\ldots$, called \emph{angular prolate spheroidal wave functions}.  In
the notation of \cite{Robin} we have $\psi_n^u(v)=\sqrt{n +\tfrac 12}{\rm Ps}_{n}(v,u)$. The corresponding eigenvalues
$\lambda_n(u)$ of $H_u$ are
\begin{equation}\label{eigenvalue}
\lambda_n(u)= 2u\pi^{-1} S_{n}^{(1)}(1,u)^2\in (0,1), \,\,\, n=0,1,2,\ldots,
\end{equation}
where $S_{n}^{(1)}(\cdot,u)$ is the \emph{radial} prolate spheroidal wave function of the first kind.
In particular, $\lambda_n(u)$ depends continuously on $u$. In addition, we have $1>\lambda_n(u)>\lambda_{n+1}(u)>0$ for all $n$ and $u$.

Now $\tfrac 12\in \sigma(H)$ exactly when $u$ is chosen so as to make $\lambda_{n}(u)=\tfrac 12$ for some $n$.
Since $\lim_{u\rightarrow\infty} \lambda_n(u) =1$, and $\lim_{u\rightarrow\infty} \lambda_n(u) =0$
for fixed $n$ (see \cite{Slepian}), it follows by continuity that for each $n$ we get at least one value $u_n\in (0,1)$ with $\lambda_n(u_n) = \tfrac 12$. On the other hand, $H_u\leq H_{u'}$ if $u\leq u'$, so each $\lambda_n(u)$ is an increasing function of $u$, and $u_n$ is thus uniquely determined. Since for given $n$, we have $\lambda_n(u)> \lambda_{n+1}(u)$ for all $u$ it follows by continuity that $u_n<u_{n+1}$, i.e. the sequence $(u_n)$ is increasing.

Figures \ref{fig:eigvs} and \ref{fig:norm} show the $u$-dependence of the largest eigenvalues, as well as the relevant commutator norm $\|A_3\|$, obtained from the above representation. The critical values $u_n$
can be computed numerically; the smallest two are approximately $u_0\approx 0.849$ and $u_1\approx 2.381$.
\begin{figure}[h]
  \centering
  \includegraphics[width=8cm]{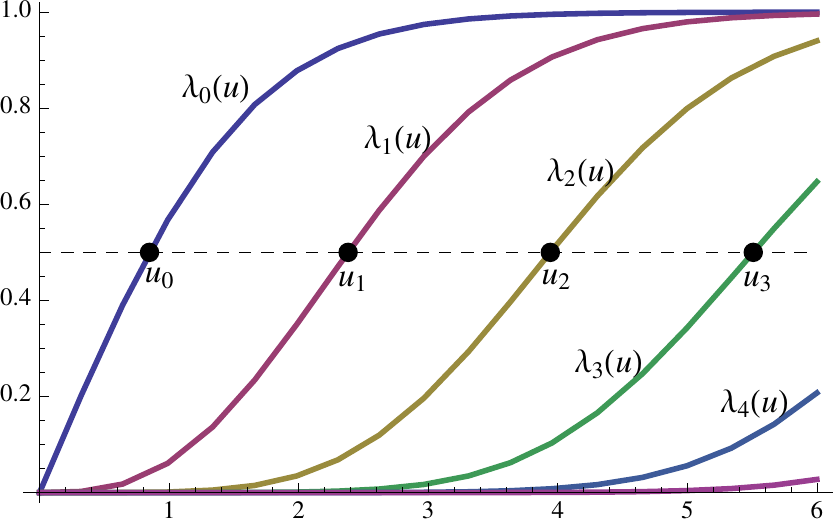}
  \caption{The eigenvalues $\lambda_n(u)$ of $H$ as functions of the parameter $u=\tfrac 14(t\hbar)^{-1}md_1d_2$}
 \label{fig:eigvs}
 \end{figure}
\begin{figure}[ht]
  \centering
  \includegraphics[width=8cm]{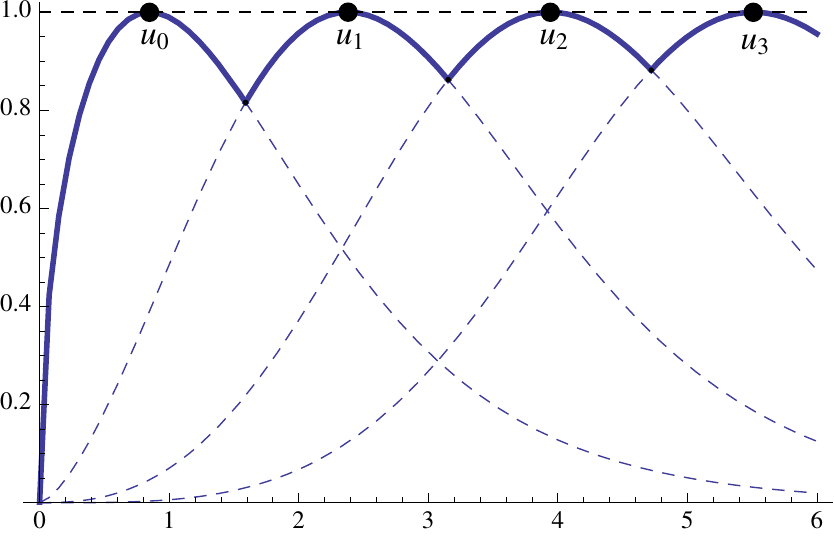}
  \caption{The norm of the operator $A_3$ as a function of $u$}
 \label{fig:norm}
 \end{figure}

Summarizing, a state that maximizes the violation of CHSH inequality for position measurements exists if and only if both Alice and Bob adjust their parameters in such a way that \eqref{ueq0} holds with $u$ one of the critical values, say $u_n$ for both Alice and Bob \footnote{In order to simplify the form of the maximally violating wave function, we have assumed in the beginning that Alice and Bob use the same parameter values. Clearly, they could just as well use different values, in which case Alice must have $u_A=u_n$ for some $n$, and Bob can have $u_B=u_{n'}$ for some other number $n'$.}. Using \eqref{violatingvector2}, the corresponding wave function $\Psi\in L^2(\Rl^2,dq_A,dq_B)$ can then be expressed quite explicitly. For this we need the functions $e^\pm$ of \eqref{eeq}; but now
$\sqrt{2}P_2\psi_n^{u_n} = \sqrt{\tfrac 12(n+\tfrac 12)}{\rm Ps}_n(\cdot,u_n)\in L^2(\Rl)$, so $e^\pm(q)$ is simply this spherical function, multiplied with $-i$ if $q\in \Delta_1$, and with $\pm 1$ otherwise. In particular, the wave function $\Psi$ is discontinuous at the lines $q_A=\pm 1$ and $q_B=\pm 1$. In the case where the intervals are centered at the origin, i.e. $\Delta_{i}= [-d_{i}/2,d_{i}/2]$, we get

\begin{equation}\label{concretevector}
\Psi(q_A,q_B) = C_0(q_A,q_B)\, \frac{1}{d_{1}}(n+\frac 12){\rm Ps}_n\big(2q_A/d_{1},u_n\big){\rm Ps}_n\big(2q_B/d_{1},u_n\big) e^{i\Theta(q_A,q_B)},
\end{equation}
where
\begin{equation}\label{constant}
C_0(q_A,q_B) = \begin{cases} \sqrt{1+1/\sqrt{2}}, & q_A,q_B\in \Delta_1, \text{or } q_A,q_B\notin \Delta_1\\
\sqrt{1-1/\sqrt{2}}, & \text{otherwise}\end{cases}
\end{equation}
is real, and the phase $\Theta(q_A,q_B)$ is given by
\begin{eqnarray}
\Theta(q_A,q_B) &=& -2u_n(q_A^2+q_B^2)/(d_1d_2) +\begin{cases} \phi_0^++\pi, & q_A,q_B\in \Delta_1\\
               \phi_0^{+}, & q_A,q_B\notin \Delta_1,\\
               \phi_0^-+\pi/2, & \text{otherwise}.\end{cases}\\
\phi_0^\pm &=& \mp \arctan((\sqrt{2}\pm 1)^{-1})\nonumber.
 \end{eqnarray}
Figure \ref{fig:wave} shows the picture of the simplest choice for the wave function.
\begin{figure}[ht]
  \centering
  \includegraphics[width=13cm]{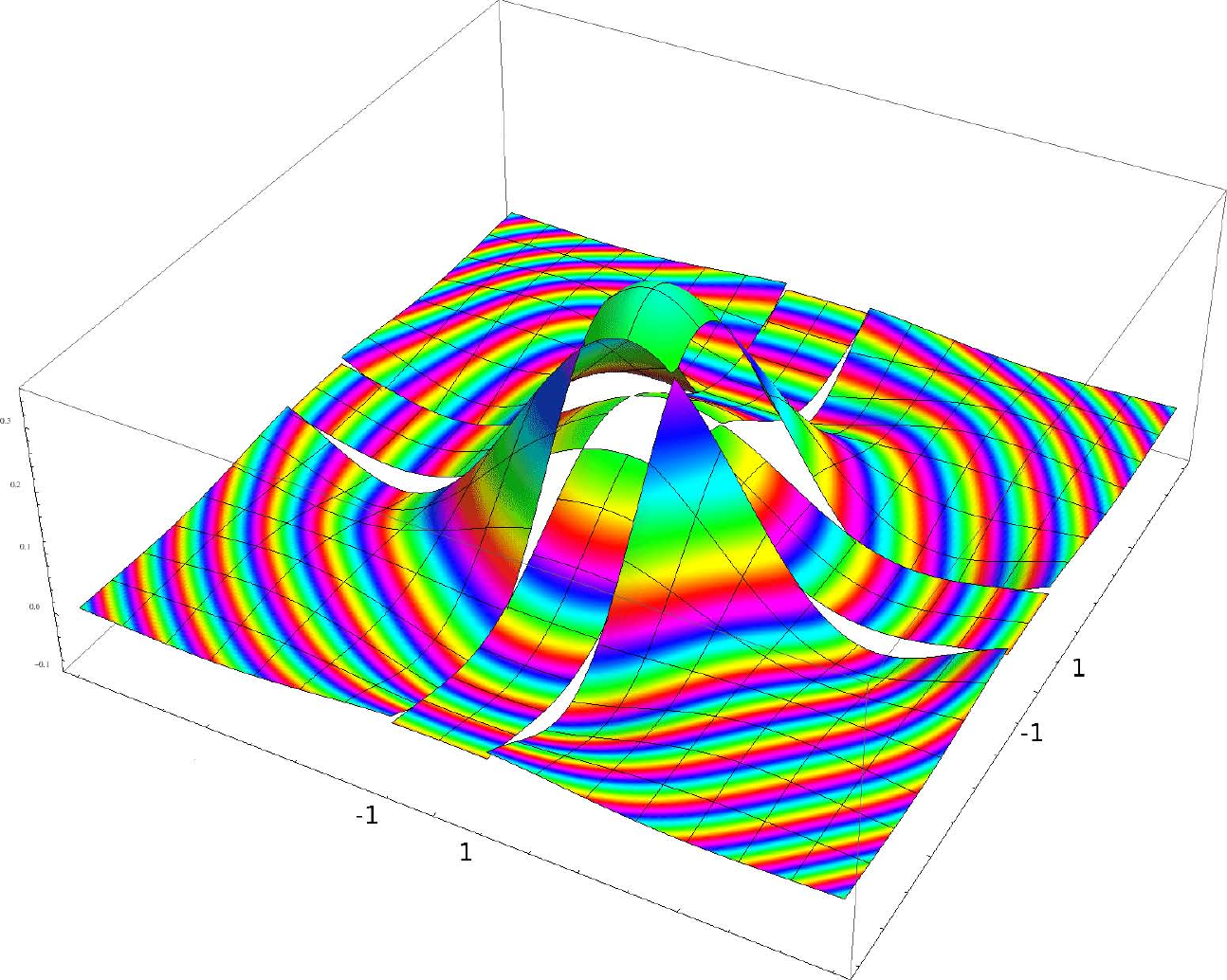}
  \caption{The maximally violating wavefunction \eqref{concretevector} with $n=0$ ($u=u_0\approx 0.849$), $\Delta_1 = \Delta_2 = [-1,1]$. The plotted function is $\Psi(q_A,q_B)$ \emph{without} the complex phase factor $e^{i\Theta(q_A,q_B)}$; the phase $\Theta(q_A,q_B)$ is shown in color, going through red, yellow, green, cyan, blue, magenta, and again red, as the value varies from $0$ to $2\pi$. Note the discontinuity lines $q_A=\pm 1$, $q_B=\pm 1$ of the factor $C_0(q_A,q_B)$ (see equation \eqref{constant}), marking the measurement interval.}
 \label{fig:wave}
 \end{figure}

\subsection{Half-lines: totally noncommutative case}\label{halfline}

Here we set $\Delta_{i} =[x_i,\infty)$, $i=1,2$, $x_i\in \Rl$; then $\ker A_3=\{0\}$, and we have the tensor product representation for the full operators. As before, $\mathcal{K}=L^2(\Delta_1)$. We first apply the unitary shifts that transform the situation to the dilation invariant case $\Delta_i = [0,\infty)$, $i=1,2$; then the units of position and momentum play no role, and $A_1 \cong {\rm Sign}(Q)$ and $A_2 \cong {\rm Sign}(P)$. In particular, the spectrum of $H$ does not depend on any of the parameters.

The transformation $V:L^2(\Rl)\to L^2([0,\infty))\otimes \Cx^2$ given by \eqref{Vphi} has now a particularly simple form, namely
\begin{equation}\label{Vconcrete}
V\phi = i[\chi_{[0,\infty)}(Q)\phi \otimes |+\rangle_1 +\Pi\chi_{(-\infty,0]}(Q)\phi\otimes |-\rangle_1],
\end{equation}
where $\Pi:L^2((-\infty,0])\to L^2([0,\infty))$ is the parity operator. However, we still have to determine the spectrum of the operator $H = \chi_{[0,\infty)}(Q)\chi_{[0,\infty)}(P)\chi_{[0,\infty)}(Q)$, acting on $\mathcal{K} = L^2([0,\infty))$. Here it is convenient to utilize the dilation invariance of the projections; we seek a unitary operator $W$ that diagonalizes the dilation generator $\mathsf{D}:=\mathsf{Q}\mathsf{P}+\mathsf{P}\mathsf{Q}$ by way of
\begin{equation}\label{dilationtransf}
W\mathsf{D}W^*=Q\otimes I_{\Cx^2}
\end{equation}
Such a unitary is obtained by first separating the positive and negative half-axis by using the $V$ above, then expanding $L^2([0,\infty))$ into the full $L^2(\Rl)$ via the unitary operator
$U_{+}:L^2([0,\infty))\to L^2(\Rl)$, where
\begin{equation}\label{Uplus}
(U_+\varphi)(\lambda) = \sqrt{2}e^{\lambda}\varphi(e^{2\lambda}),
\end{equation}
and then applying the Fourier-Plancherel operator $F$. Indeed, the operator $(U_+\otimes I_{\Cx^2}) V$ transforms $\mathsf{D}$ into $P\otimes \idty_{\Cx^2}$, so we get \eqref{dilationtransf} by setting $W:=-i(FU_+\otimes I_{\Cx^2})V:L^2(\Rl)\to L^2(\Rl)\otimes \Cx^2$ (where the factor $-i$ is chosen for convenience.) This unitary operator approximately diagonalizes $P_1$ and $P_2$ simultaneously, meaning that we get explicit form for the functions of $H$ appearing in \eqref{paulirel} (see Appendix). We can then explicitly compute $H=P_1P_2P_1$: this gives
\begin{equation}\label{Hconnection}
H\cong \frac 12 (\idty+\tanh(\tfrac 12 \pi Q)),
\end{equation}
acting on the space $L^2(\Rl)= U_+L^2([0,\infty))$.

It follows from \eqref{Hconnection} that the spectrum of $H$ is purely absolutely continuous, and contains the point $\tfrac 12$. Hence, maximally violating states $\omega$ exist, are of the form \eqref{violstate}, and each of them is singular. Note that the tensor product representation space is now $L^2(\Rl^2) \otimes \Cx^4$.

It is possible to further specify the properties of the restriction of $\omega$ to the first tensor factor $L^2(\Rl^2)$. According to \eqref{Hconnection}, spectral projections of $H$ associated with intervals around $\tfrac 12$ correspond bijectively to those of $Q$ around $0$. Hence, equations \eqref{delhalf2} imply
that $\delta_{\frac 12,\frac 12} (f(Q_A,Q_B)) = f(0,0)$ for any bounded measurable function $f:\Rl^2\to \Cx$ continuous at $(0,0)$.
With this information, we can now go back to the representation where $T$ is given by \eqref{bellop} with $A_1 = {\rm Sign}(Q_A)$, $A_2 = {\rm Sign}(P_A)$, and $B_i$ similarly; using \eqref{dilationtransf}, as well as the Schwarz inequality, we see that any maximally violating state $\omega$
satisfies
\begin{equation}\label{dilations}
\omega(f(\mathsf{D}_A,\mathsf{D}_B)\,X) = \omega(X\,f(\mathsf{D}_A,\mathsf{D}_B))=f(0,0)\omega(X),
\end{equation}
if $f:\Rl^2\to \Cx$ is a bounded measurable function continuous at $(0,0)$, and $X\in \BB(L^2(\Rl^2))$ is arbitrary. In particular, each maximally violating state is \emph{invariant under dilations} in this representation. Since $\mathsf{Q}$ and $\mathsf{P}$ transform covariant under dilations, this means that \emph{maximally violating states are concentrated on $(0,0)$ and infinity, in both position and momentum representations}; the precise statement is the following observation.
\begin{prop}\label{dilationprop} Let $\omega\in \BB(L^2(\Rl^2))^*$ be a dilation invariant state, and let $f:\Rl^2\to \Cx$ be a continuous function vanishing at the origin and infinity. Then
$$
\omega(f(\mathsf{Q}_A,\mathsf{Q}_B)) =\omega(f(\mathsf{P}_A,\mathsf{P}_B)) = 0.
$$
\end{prop}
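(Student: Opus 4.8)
The plan is to turn dilation invariance into a quantitative statement by an ergodic-type averaging over the dilation orbit. Let $U_s$ denote the one-parameter group of joint dilations on $L^2(\Rl^2)$, under which $U_s\mathsf{Q}_iU_s^*=e^{s}\mathsf{Q}_i$ and $U_s\mathsf{P}_iU_s^*=e^{-s}\mathsf{P}_i$ for $i=A,B$. Since $\mathsf{Q}_A$ and $\mathsf{Q}_B$ commute, $f(\mathsf{Q}_A,\mathsf{Q}_B)$ is simply multiplication by $f$ on $L^2(\Rl^2)$, and $U_sf(\mathsf{Q}_A,\mathsf{Q}_B)U_s^*$ is multiplication by $q\mapsto f(e^{s}q)$. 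Dilation invariance then reads $\omega(f(\mathsf{Q}_A,\mathsf{Q}_B))=\omega\bigl(U_sf(\mathsf{Q}_A,\mathsf{Q}_B)U_s^*\bigr)$ for every $s\in\Rl$.

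First I would average this identity over $s\in[-L,L]$. Set $g_L(q)=\frac{1}{2L}\int_{-L}^{L}f(e^{s}q)\,ds$; a Fubini computation on matrix elements shows $g_L(\mathsf{Q}_A,\mathsf{Q}_B)=\frac{1}{2L}\int_{-L}^{L}U_sf(\mathsf{Q}_A,\mathsf{Q}_B)U_s^*\,ds$, the integral converging in operator norm because $s\mapsto f(e^{s}\,\cdot\,)$ is continuous in the supremum norm (this uses $f\in C_0(\Rl^2)$ with $f(0,0)=0$, via the usual uniform-continuity-on-an-annulus splitting). Pulling the bounded functional $\omega$ through the Bochner integral and applying invariance termwise gives $\omega(g_L(\mathsf{Q}_A,\mathsf{Q}_B))=\omega(f(\mathsf{Q}_A,\mathsf{Q}_B))$ for all $L$.

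The crux is then to show $\norm{g_L}_\infty\to0$ as $L\to\infty$, for then $\norm{g_L(\mathsf{Q}_A,\mathsf{Q}_B)}\to0$ and norm-continuity of $\omega$ forces $\omega(f(\mathsf{Q}_A,\mathsf{Q}_B))=0$. For $q\ne0$ I substitute $r=e^{s}|q|$ to obtain $g_L(q)=\frac{1}{2L}\int_{|q|e^{-L}}^{|q|e^{L}}f(r\hat q)\,\frac{dr}{r}$ with $\hat q=q/|q|$. Given $\varepsilon>0$, choose $0<\delta<R$ with $|f|<\varepsilon$ on $\{|q|<\delta\}\cup\{|q|>R\}$; the part of the integral with $r\in[\delta,R]$ is bounded by $\norm{f}_\infty\log(R/\delta)$ \emph{uniformly in $q$ and $L$}, whereas on the complementary range the integrand is $<\varepsilon$ against a measure $\frac{dr}{r}$ of total mass $2L$. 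Hence $|g_L(q)|\le \norm{f}_\infty\log(R/\delta)/(2L)+\varepsilon$, and since $g_L(0,0)=f(0,0)=0$, taking $L$ large yields $\norm{g_L}_\infty<2\varepsilon$.

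The main obstacle, and the reason the naive single-limit argument $f(e^{s}q)\to0$ does not suffice, is that $\omega$ is a \emph{singular} state: it is not normal, so pointwise or weak-$*$ limits of the scaled symbols cannot be passed through $\omega$, and only operator-norm limits are legitimate. Averaging is precisely what upgrades the pointwise decay of $f(e^{s}\,\cdot\,)$ to uniform decay of $g_L$: because $f$ vanishes at \emph{both} ends of every dilation orbit, the origin and infinity, the logarithmic reparametrization confines the non-negligible contribution to a fixed window of bounded logarithmic length, which is then washed out by the growing averaging interval. The statement for $\mathsf{P}_A,\mathsf{P}_B$ follows verbatim, since dilations scale momenta by $e^{-s}$ and the symmetric average is insensitive to this sign (alternatively, conjugate by the Fourier transform, which commutes with dilations up to parity and preserves dilation invariance of states).
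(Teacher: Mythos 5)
Your proof is correct, and it takes a genuinely different route from the paper's. The paper argues by measure-theoretic localization: joint dilation invariance forces $\omega(\chi_{B(r)}(\mathsf{Q}_A,\mathsf{Q}_B))$ to be independent of $r$ (with $B(r)$ the ball of radius $r$), so every annulus $K(r_1,r_2)=B(r_2)\setminus B(r_1)$ carries weight zero; \emph{positivity} of $\omega$ then kills every bounded measurable function supported in an annulus, and the hypotheses on $f$ enter only through the sup-norm approximation $\norm{\chi_{K(1/n,n)}f-f}_\infty\to 0$, after which norm continuity of $\omega$ finishes the argument. You instead run a Ces\`aro average over the dilation orbit and prove the quantitative bound $\norm{g_L}_\infty\le\varepsilon+\norm{f}_\infty\log(R/\delta)/(2L)$, which upgrades the merely pointwise decay of $f(e^s\,\cdot\,)$ to an operator-norm limit that any bounded functional respects; your steps (norm continuity of $s\mapsto f(e^s\,\cdot\,)$, identification of the Bochner integral with multiplication by $g_L$, pulling $\omega$ through the integral, the logarithmic substitution, and the symmetric-interval trick for the momentum case) all check out. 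Each approach buys something: the paper's is shorter and needs only additivity on indicators plus positivity; yours never invokes positivity, so it applies verbatim to any dilation-invariant \emph{bounded linear functional} rather than only to states, it makes transparent exactly why vanishing at \emph{both} the origin and infinity is indispensable (the essential support of $f$ along each orbit has fixed logarithmic length, washed out by the growing average), and being the standard invariant-mean argument it would generalize to other amenable group actions where bookkeeping with indicator functions is less convenient. Your diagnosis of why a naive limit $f(e^sq)\to 0$ cannot be passed through $\omega$ --- the state is singular, hence not weak*-continuous, so only norm limits are legitimate --- is also correct and is precisely the point both proofs must work around.
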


\begin{proof}
Set $K(r_1,r_2):=B(r_2)\setminus B(r_1)$, where $B(r)$ is the open ball in $\Rl^2$ of radius $r$ centered at the origin. By dilation invariance,
$\omega(\chi_{B(r)}(\mathsf{Q}_A,\mathsf{Q}_B))= \omega(\chi_{B(1)}(\mathsf{Q}_A,\mathsf{Q}_B))$ for all $r>0$, so for $r_1< r_2$ we get
$\omega(\chi_{K(r_1,r_2)}(\mathsf{Q}_A,\mathsf{Q}_B))= 0$.
From the positivity of $\omega$ it follows that $\omega(f(\mathsf{Q}_A,\mathsf{Q}_B))=0$ for any bounded measurable function $f:\Rl^2\to \Cx$ supported in $K(r_1,r_2)$. Now if $f:\Rl^2\to \Cx$ is continuous and vanishes at  both zero and infinity, then $\lim_{n\rightarrow\infty} \|\chi_{K(1/n,n)}f-f\|_\infty = 0$,
and hence $\omega(f(\mathsf{Q}_A,\mathsf{Q}_B))=0$ by the norm continuity of $\omega$. The case with $\mathsf{P}_A$ is similar.
\end{proof}

We now wish to find wave functions approximating the maximally violating singular states $\omega$. Since any such state is dilation invariant, one can expect that the approximating wave functions in $L^2(\Rl^2)$ would basically look like $1/\sqrt{|xy|}$, but with some regularization at the coordinate axis and infinity. (Here $1/\sqrt{x}$ comes from formally solving the ''eigenvalue equation'' $\mathsf{D}\psi = 0$.)

In order to construct such approximating wave functions, we proceed as described in the preceding section. The approximate eigenvectors of \eqref{Hconnection}, corresponding to the point $\frac 12\in \sigma(H)$ are of the form $g_\epsilon = (2\epsilon)^{-\frac 12} g(x/(2\epsilon))$, where $g\in L^2(\Rl)$ is an arbitrary unit vector, and $\epsilon>0$ is small. Hence, the corresponding vectors for the original $H = \chi_{[0,\infty)}(Q)\chi_{[0,\infty)}(P)\chi_{[0,\infty)}(Q)$, acting on $L^2([0,\infty))$, are $f_\epsilon:=iU_+^*F^*g_\epsilon$; explicitly, they are of the form
$$
f_\epsilon(x) = i\sqrt{\frac{\epsilon}{x}}f(\epsilon \ln x), \, \, x>0,
$$
where $f\in L^2(\Rl)$ is an arbitrary unit vector. Hence, the wave functions we are seeking are given by \eqref{violatingvector2}, with $e_\epsilon^\pm = V^*(f_\epsilon\otimes |\pm\rangle)$ (for both Alice and Bob). These can now be obtained from \eqref{Vconcrete}:
\begin{align*}
e_\epsilon^+(x) &= \frac{1}{\sqrt{2}} f_\epsilon(|x|), & e_\epsilon^-(x) = \frac{1}{\sqrt{2}}{\rm Sign}(x)f_\epsilon(|x|).
\end{align*}
Hence,
\begin{eqnarray}
\Psi_\epsilon(q_A,q_B)
    &=& \frac{1}{2\sqrt{2}} (e^{-i\pi/4}+{\rm Sign}(q_Aq_B))\
        \frac{\epsilon}{\sqrt{|q_Aq_B|}} f(\epsilon \ln |q_A|)
         f(\epsilon \ln |q_B|).\label{approx}
\end{eqnarray}
This same formula appears in \cite{Auberson}; however, the paper does not seem to contain any systematic derivation for the result.

The approximating wave function in the original representation, where both $A_i$ are position measurements, is then
\begin{eqnarray*}
(q_A,q_B)&\mapsto& \Psi_\epsilon(q_A,q_B) e^{-i\frac 12 \hbar^{-1}mt^{-1}(q_A^2+q_B^2)},
\end{eqnarray*}
where the measurement intervals are $[0,\infty)$ for both time zero and $t$.

We close this subsection by demonstrating that the singular states that can be approximated by the wave functions \eqref{approx} actually depend on the regularizing function $f$, even though they are all maximally violating and dilation invariant. One property of a dilation invariant state $\omega$ that we can easily determine is the expectation value $\omega_0^Q:=\omega(h(Q_A,Q_B))$, where $h:\Rl^2\to\Rl$ is bounded, measurable, and continuous at the origin, with $h(0,0)=1$. By dilation invariance, this does not depend on $h$, and describes the ''weight'' of the state at the origin in the position spectrum. Note that by Proposition \ref{dilationprop}, each dilation invariant state is concentrated at zero and infinity. However, the distribution of weight between these points is not fixed: by direct calculation using \eqref{approx}, we get
\begin{equation}
\omega_0^Q = \lim_{\epsilon\rightarrow 0} \langle \Psi_\epsilon |\chi_{[-a,a]\times [-a,a]}(Q_A,Q_B)\Psi_\epsilon\rangle= \lim_{\epsilon\rightarrow 0} \left(\int_{-\infty}^{\epsilon \ln a} |f(x)|^2\, dx\right)^2
= \left(\int_{-\infty}^{0} |f(x)|^2\, dx\right)^2,
\end{equation}
which may attain any value in $[0,1]$, depending on where $f$ is concentrated.

\subsection{Periodic sets: commutative case}
Here we simply want to make a remark about the commutativity, without trying to analyze the periodic set case systematically. Consider sets of the form
$$\Delta_i = p_i\mathbb{Z}+[0,p_i/2],$$ where $p_i>0$ are the periods.
Now we choose the length scales as $l_1:=p_1$ and $l_2:=p_2/(2\pi)$. Passing to the units where $mt^{-1}l_2$ is $1$ (see the beginning of the section) via the associated dilation, we get $A_1 \simeq g_{u}(Q)$, $A_2 \simeq g_{2\pi}(P)$, with
$$
g_{v}(q) = \begin{cases} 1,& q\in v([0,\tfrac 12 )+\mathbb{Z}),\\ -1, & \text{otherwise}\end{cases}
$$
and $u$ again given by \eqref{ueq0}; $u=mp_1p_2/(2\pi t\hbar)$. The reason for the choice of units is that $A_1$ and $A_2$ (or, equivalently, $P_1$ and $P_2$) commute if $u^{-1}$ is an integer. This can easily be seen by noting that $g_{v}(x) ={\rm Sign}(\sin(2\pi x/v))$, and using the commutation relation for the Weyl operators. Moreover, the converse is also true; see the general characterization of commuting functions of $Q$ and $P$ \cite{BuschII,Ylinen}. In the commuting case the spectrum of $H$ contains only the points $0$ and $1$, and we have $A_3=B_3 = 0$. The CHSH-inequality is then actually satisfied for all states, and the situation is classical.

It is interesting to observe that when the parameter $u^{-1}$ is slightly perturbed from an integer, the commutator norm $\|A_3\|$ discontinuously jumps to a nonzero value, which is large enough to allow a violation of Bell's inequality. In order to show this, we take $u= 1+\epsilon$, with $\epsilon>0$. Now $g_{2\pi}$ has the Fourier expansion
$$
g_{2\pi}(p) =  \sum_{n\in \mathbb{Z}} c_n \frac{1}{\sqrt{2\pi}}e^{inp},
$$
where $c_{-n} = -c_n$, and
$\sum_{n\in \mathbb{Z}} |c_n(u)|^2 =  \int_{-\pi}^\pi |s_u(p)|^2 \, dp= 2\pi$. For each $0<\epsilon<1/2$ choose a unit vector $\psi_\epsilon\in L^2(\mathbb{R})$ with support in $[0,\epsilon)$. Then $g_{1+\epsilon}(Q)\psi_\epsilon = \psi_\epsilon$, and
$$
g_{2\pi}(P)s_{1+\epsilon}(Q))\psi_\epsilon = g_{2\pi}(P)\psi_\epsilon = \sum_{n\in\mathbb{Z}} \frac{1}{\sqrt{2\pi}}c_n e^{inP}\psi_\epsilon,
$$
where the series converges in $L^2(\mathbb{R})$ because the terms are orthogonal. Since $(e^{inP}\psi_\epsilon)(x) = \psi_\epsilon(x+n)$, we get
$$g_{1+\epsilon}(Q)e^{inP}\psi_\epsilon = \begin{cases} e^{inP}\psi_\epsilon, & 0\leq n <\tfrac 12 (1/\epsilon-1),\\
-e^{inP}\psi_\epsilon, & -\tfrac 12 (1+1/\epsilon) < n < 0\end{cases}.$$
It follows that
$$[g_{2\pi}(P),s_{1+\epsilon}(Q)]\psi_\epsilon = 2\sum_{-1/(2\epsilon)<n\leq -1} \frac{1}{\sqrt{2\pi}}c_n e^{inP}\psi_\epsilon+ \varphi_\epsilon,$$
where $\varphi_\epsilon$ is orthogonal to the sum. Hence,
$$|[g_{2\pi}(P),s_{1+\epsilon}(Q)]\|^2\geq \frac{2}{\pi}\sum_{1\leq n < 1/(2\epsilon)} |c_n|^2.$$ Here the right hand side tends to $2$ as $\epsilon \rightarrow 0$, which means that $|[g_{2\pi}(P),gs_{1+\epsilon}(Q)]\|\geq \sqrt{2}$ for all sufficiently small $\epsilon>0$. Hence, $\|A_3\|$ jumps discontinuously from $0$ to some value larger than $1/\sqrt{2}$, corresponding to the Bell correlation $\sqrt{6}$.

\

\noindent {\bf Acknowledgement:} This work was supported by Finnish Cultural Foundation, Academy of Finland, and the EU Integrated Project SCALA.

\section*{Appendix: Almost simultaneous diagonalization of  ${\rm Sign}(Q)$ and ${\rm Sign} (P)$.}

The diagonalization is provided by the unitary operator $W:L^2(\Rl)\to L^2(\Rl)\otimes \Cx^2$ defined in Section \ref{halfline}. This can be written in the form
$$W\phi = (FU_+\otimes S)(\chi_{[0,\infty)}(Q)\phi \otimes |+\rangle +\Pi\chi_{(-\infty,0]}\phi\otimes |-\rangle),$$
where $|\pm\rangle$ are the eigenstates of $\sigma_3$, $S$ is the Hadamard matrix $S = \frac{1}{\sqrt{2}}\left(\begin{smallmatrix}1&1\\ 1&-1\end{smallmatrix}\right)$, and $U_+$ is defined in \eqref{Uplus}.

For ${\rm Sign}(Q)$, we get
$$W{\rm Sign}(Q)W^*=I\otimes \sigma_1$$
by trivial calculation.
The form of $W{\rm Sign}(P)W^*$ is not so obvious, and the following computation
actually describes how to find a suitable $W$. Let $\varphi\in L^2(\Rl)$ be a Schwartz space function, with $\varphi(0)=0$. The set of such functions is dense in $L^2(\Rl)$. By using
dominated convergence twice, and then Fubini's theorem (noting that $\int_{-\infty}^\infty |\varphi(x)|/|x|\, dx <\infty$) we get
\begin{equation*}
\langle \varphi |{\rm Sign}(P)\varphi\rangle = \lim_{\delta\rightarrow 0+}\sum_{\epsilon_1,\epsilon_2=\pm 1}\int_{\Rl^2_{\epsilon_1,\epsilon_2}} \overline{\varphi(x)}K^\delta_{\epsilon_1,\epsilon_2}(x,y)\varphi(y)\, dxdy,
\end{equation*}
with
$$
K^\delta_{\epsilon_1,\epsilon_2}(x,y) := \frac{1}{2\pi}\left[\frac{1}{\delta\epsilon_1 x-i(x-y)} -\frac{1}{\delta\epsilon_2 y+i(x-y)}\right],
$$
where $\Rl^2_{\epsilon_1,\epsilon_2}$ denotes the appropriate quadrant.
Each kernel $K^\delta_{\epsilon_1,\epsilon_2}$ is invariant under dilations, i.e.
$aK^\delta_{\epsilon_1,\epsilon_2}(x,y) =K^\delta_{\epsilon_1,\epsilon_2}(a^{-1}x,a^{-1}y)$ for all
$a>0$. Using this we can transform them into convolution kernels, and then diagonalize using the
Fourier transform. Indeed, put
$\tilde{K}^\delta_{\epsilon_1,\epsilon_2}(\lambda) := 2K^\delta_{\epsilon_1,\epsilon_2}(\epsilon_1e^{\lambda},\epsilon_2e^{-\lambda})$, $\lambda\in \Rl$;
this gives
$$\tilde{K}^\delta_{\epsilon_1,\epsilon_2}(\lambda) = \frac{1}{\pi}\left[\frac{1}{\delta e^\lambda-2iG_{\epsilon_1,\epsilon_2}(\lambda)}-\frac{1}{\delta e^{-\lambda} +2iG_{\epsilon_1,\epsilon_2}(\lambda)}\right],
$$
where $G_{\pm\pm} = \pm \sinh\lambda$, and $G_{\pm\mp} = \pm \cosh\lambda$.
For $\delta>0$, each $\tilde{K}^\delta_{\epsilon_1,\epsilon_2}$ is both integrable and square integrable,
so we can put
$\hat{K}^\delta_{\epsilon_1,\epsilon_2}(\eta):= \int_\Rl e^{-i\eta\lambda} \tilde{K}^\delta_{\epsilon_1,\epsilon_2}(\lambda)\, d\lambda$. Then we compute
\begin{eqnarray*}
&& \int_{\Rl^2_{\epsilon_1,\epsilon_2}} \overline{\varphi(x)}K^\delta_{\epsilon_1,\epsilon_2}(x,y)\varphi(y)\, dxdy= \int_\Rl d\lambda\, \overline{(V_{\epsilon_1}\varphi)(\lambda)}(\tilde{K}^\delta_{\epsilon_1,\epsilon_2}*V_{\epsilon_2}\varphi)(\lambda)\\
&=& \int_\Rl d\eta\, \overline{(FV_{\epsilon_1}\varphi)(\eta)}\hat{K}^\delta_{\epsilon_1,\epsilon_2}(\eta)(V_{\epsilon_2}\varphi)(\eta)= \langle FV_{\epsilon_1}\varphi|\hat{K}^\delta_{\epsilon_1,\epsilon_2}FV_{\epsilon_2}\varphi\rangle.
\end{eqnarray*}
It remains to take the limit $\delta\rightarrow 0+$. To this end, first note that since e.g.
$
\left|\delta e^\lambda-2iG_{\pm,\mp}(\lambda)\right|^{-1}\leq (\cosh\lambda)^{-1}
$
for any $\delta>0$, and since $\lambda\mapsto 1/\cosh\lambda$ is integrable, it follows that $\hat{K}^\delta_{\pm,\mp}$ are bounded uniformly for $\delta$, and
$\hat{K}^0_{\pm,\mp}(\eta)$ exists with
$\hat{K}^0_{\pm,\mp}(\eta) =  \lim_{\epsilon\rightarrow 0+} \hat{K}_{\pm,\mp}^\epsilon(\eta)$
pointwise by dominated convergence. Hence, the corresponding bounded multiplication
operators on $L^2(\Rl,d\eta)$ converge in the strong operator topology, giving
$$
\lim_{\delta\rightarrow 0+} \langle FV_{\epsilon_1}\varphi|\hat{K}^\delta_{\pm,\mp}V_{\epsilon_2}\varphi\rangle
=\langle FV_{\epsilon_1}\varphi|\hat{K}^0_{\pm,\mp}FV_{\epsilon_2}\varphi\rangle.
$$
Since $\tilde{K}^0_{\pm,\mp}(\lambda) = \mp (i\pi\cosh\lambda)^{-1}$ are even functions, we have
$$
\hat{K}^0_{\pm,\mp}(\eta) = \mp\frac{2}{i\pi} \int_0^\infty \frac{\cos(\eta\lambda)}{\cosh\lambda}\, d\lambda = \pm \frac{i}{\cosh(\eta\pi/2)}.
$$
For diagonal elements, $\tilde{K}^0_{\pm,\pm}= \mp (i\pi\sinh \lambda)^{-1}$, and the corresponding Fourier integral does not exist. However, the limit $\hat{K}^0_{\pm,\pm}(\eta):= \lim_{\delta\rightarrow 0+} \hat{K}^\delta_{\pm,\pm}(\eta)$ exists pointwise, because $\tilde{K}^\delta_{\pm,\pm}$ is an odd function; in fact,
\begin{equation}
\hat{K}^\delta_{\pm,\pm}(\eta) = -2i\int_0^\infty \sin(\eta\lambda)\tilde{K}^\delta_{\pm,\pm}(\lambda)\, d\lambda
  \longrightarrow  \pm\frac{2}{\pi}\int_0^\infty \frac{\sin(\eta\lambda)}{\sinh\lambda}\, d\lambda = \pm\tanh(\frac{\eta\pi}{2}),
\end{equation}
as $\delta\rightarrow 0+$, the singularity at the origin being canceled by the factor $\sin(\eta\lambda)$. We can now use e.g. the bound $|\hat{K}^\delta_{\pm,\pm}(\eta)|\leq M|\eta|\leq M(\eta^2+1)$,
where $M=\tfrac{2}{\pi}\int_0^\infty \lambda(\sinh \lambda)^{-1}\, d\lambda<\infty$, and the fact that $\|\mathsf{Q} FV_\pm\varphi \|=\|\mathsf{P}V_\pm\varphi\|\leq \|\varphi\|+2\|\mathsf{Q}\varphi\|<\infty$, to conclude that
$$
\lim_{\delta\rightarrow 0+} \langle FV_{\epsilon_1}\varphi|\hat{K}^\delta_{\pm,\pm}FV_{\epsilon_2}\varphi\rangle
=\langle FV_{\epsilon_1}\varphi|\hat{K}^0_{\pm,\pm}FV_{\epsilon_2}\varphi\rangle.
$$
The coefficient matrix is thus $\hat{K}^0(\eta) = \tanh(\eta\pi/2)\sigma_3 -{\rm sech}(\eta\pi/2)\sigma_2$.
Finally, taking into account the Hadamard matrix $S$ in the definition of $W$, we get the result
$$W{\rm Sign}(P)W^* = \tanh(Q\pi/2) \otimes \sigma_1+{\rm sech}(Q\pi/2)\otimes \sigma_2.$$

\bibliographystyle{abbrv}

\end{document}